\newcommand{\dP}{{\sf P}^*}
\newcommand{\dhP}{\hat{\sf P}^*}
\newcommand{\A}{{\sf A}}
\newcommand{\dA}{{\sf A}^*}
\newcommand{\dhA}{\hat{\sf A}^*}
\newcommand{\B}{{\sf B}}
\newcommand{\dB}{{\sf B}^*}
\newcommand{\dhB}{\hat{\sf B}'}
\newcommand{\ddhB}{\hat{\sf B}^*}
\newcommand{\CFProtocol}{\mathtt{Coin-Flip \ Protocol}}
\newcommand{\IQZK}{\mathtt{IQZK \ Protocol}}
\newcommand{\NIZKProtocol}{\mathtt{NIZK \ Protocol}}
\newcommand{\IQZKF}{\mathtt{IQZK^{\mathcal{F_{\sf{COIN}}}} \ Protocol}}
\newcommand{\F}{\mathcal{F}}
\newcommand{\cF}{\mathcal{F_{\sf{COIN}}}}
\newcommand{\CFP}{\mathtt{CFP}}
\newcommand{\CRS}{\mathtt{CRS}}
\newcommand{\NIZK}{\mathtt{NIZK}}
\newcommand{\COIN}{\mathtt{COIN}}
\newcommand{\start}{\scriptstyle{\sf{START}}}
\newcommand{\ok}{\scriptstyle{\sf{OK}}}
\newcommand{\refuse}{\scriptstyle{\sf{REFUSE}}}
\newcommand{\fail}{\scriptstyle{\sf{FAIL}}}
\newcommand{\SNIZK}{\sf{\hat{S}_{\mathtt{NIZK}}}}
\newcommand{\SIQZK}{\sf{\hat{S}_{\mathtt{IQZK}}}}
\newcommand{\SIQZKF}{\sf{\hat{S}_{\mathtt{IQZK^{\mathcal{F_{\sf{COIN}}}}}}}}
\newcommand{\GH}{{\cal G}_{\tt H}}
\newcommand{\GB}{{\cal G}_{\tt B}}
\newcommand{\pkB}{\mathtt{pkB}}
\newcommand{\pkH}{\mathtt{pkH}}
\newcommand{\sk}{\mathtt{sk}}
\newcommand{\approxq}{\stackrel{\text{\it\tiny q}}{\approx}}
\newcommand{\approxs}{\stackrel{\text{\it\tiny s}}{\approx}}
\newcommand{\crs}{\omega}
\newcommand{\negl}[1]{\mathit{negl}({#1})}
\newcommand{\set}[1]{\{#1\}}
\newcounter{itm}
\newenvironment{myprotocol}[1]
  {\begin{minipage}{\columnwidth} 
    \begin{framed}\hspace{0ex} 
     \begin{minipage}{0.99\columnwidth}
       {\bf #1:}
       \setcounter{itm}{1}
       \begin{list}{\arabic{itm}.}{\usecounter{itm}}}
   {    \end{list}
       \vspace{-1.5ex} 
       \end{minipage} 
     \end{framed} 
    \end{minipage}\vspace{-0.6ex}}
\newenvironment{myfigure}[1]
         {\begin{figure}[#1] \centering}
         { \end{figure}}
\newcommand{{ \input{ps/.pstex_t} }}[1]{{ \input{pdf/#1.pdftex_t} }}
\newcommand{{ \input{ps/.pstex_t} }}[1]{{ \input{ps/#1.pstex_t} }}
\begin{document}
\allowdisplaybreaks
\pagestyle{plain}
\title{Quantum-Secure Coin-Flipping and Applications}

\author{Ivan Damg{\aa}rd \and Carolin Lunemann} 
\institute{DAIMI, Aarhus University, Denmark\\\email{\{ivan|carolin\}@cs.au.dk} }

\maketitle

\begin{abstract}
In this paper, we prove classical coin-flipping secure in the presence of quantum adversaries. The proof uses a recent result of Watrous~\cite{Watrous09} that allows quantum rewinding for protocols of a certain form. We then discuss two applications. First, the combination of coin-flipping with any non-interactive zero-knowledge protocol leads to an easy transformation from non-interactive zero-knowledge to interactive quantum zero-knowledge. Second, we discuss how our protocol can be applied to a recently proposed method for improving the security of quantum protocols~\cite{DFLSS09}, resulting in an implementation without set-up assumptions. Finally, we sketch how to achieve efficient simulation for an extended construction in the common-reference-string model. \\[2ex]
{\bf Keywords.} quantum cryptography, coin-flipping, common reference string, quantum zero-knowledge. 
\end{abstract}

%%% INTRO %%%

\section{Introduction}
\label{sec:intro}

In this paper, we are interested in a standard coin-flipping protocol with classical messages exchange but where the adversary is assumed to be capable of quantum computing. Secure coin-flipping allows two parties Alice and Bob to agree on a uniformly random bit in a fair way, i.e., neither party can influence the value of the coin to his advantage. The (well-known) protocol proceeds as follows: Alice commits to a bit $a$, Bob then sends bit $b$, Alice opens the commitment and the resulting coin is the exclusive disjunction of both bits, i.e.\ $coin = a \oplus b$.

For Alice's commitment to her first message, we assume a classical bit commitment scheme. Intuitively, a commitment scheme allows a player to commit to a value, while keeping it hidden ({\em hiding property}) but preserving the possibility to later reveal the value fixed at commitment time ({\em binding property}). More formally, a bit commitment scheme takes a bit and some randomness as input. The hiding property is formalized by the non-existence of a distinguisher able to distinguish with non-negligible advantage between a commitment to 0 and a commitment to 1. The binding property is fulfilled, if it is infeasible for a forger to open one commitment to both values 0 and 1. The hiding respectively binding property holds with unconditional (i.e.\ perfect or statistical) security in the classical and the quantum setting, if the distinguisher respectively the forger is unrestricted with respect to his (quantum-) computational power. In case of a polynomial-time bounded classical distinguisher respectively forger, the commitment is computationally hiding respectively binding. The computationally hiding property translates to the quantum world by simply allowing the distinguisher to be quantum. However, the case of a quantum forger can not be handled in such a straightforward manner, due to the difficulties of rewinding in general quantum systems (see e.g.~\cite{vandeGraaf97,DFS04,Watrous09} for discussions).

For our basic coin-flip protocol, we assume the commitment to be {\it unconditionally binding} and {\it computationally hiding against a quantum adversary}.\footnote{Recall that unconditionally secure commitments, i.e.\ unconditionally hiding and binding at the same time, are impossible in both the classical and the quantum world.} Thus, we achieve unconditional security against cheating Alice and quantum-computational security against dishonest Bob. Such a commitment scheme follows, for instance, from any pseudorandom generator~\cite{Naor91}, secure against a quantum distinguisher. Even though the underlying computational assumption, on which the security of the embedded commitment is based, withstands quantum attacks, the security proof of the entire protocol and its integration into other applications could previously not be naturally translated from the classical to the quantum world. Typically, security against a classical adversary is argued using rewinding of the adversary. But in general, rewinding as a proof technique cannot be directly applied, if Bob runs a quantum computer: First, the intermediate state of a quantum system cannot be copied~\cite{WZ82}, and second, quantum measurements are in general irreversible. Hence, in order to produce a classical output, the simulator had to (partially) measure the quantum system without copying it beforehand, but then it would become generally impossible to reconstruct all information necessary for correct rewinding. For these reasons, no simple and straightforward security proofs for the quantum case were previously known. 

In this paper, we show the most natural and direct quantum analogue of the classical security proof for standard coin-flipping, by using a recent result of Watrous~\cite{Watrous09}. Watrous showed how to construct an efficient quantum simulator for quantum verifiers for several zero-knowledge proof systems such as graph isomorphism, where the simulation relies on the newly introduced \emph{quantum rewinding theorem}. We now show that his quantum rewinding argument can also be applied to classical coin-flipping in a quantum world. 

By calling the coin-flip functionality sequentially a sufficient number of times, the communicating parties can interactively generate a common random string from scratch. The generation can then be integrated into other (classical or quantum) cryptographic protocols that work in the common-reference-string model. This way, several interesting applications can be implemented entirely in a simple manner without any set-up assumptions. Two example applications are discussed in the second part of the paper. 

The first application relates to zero-knowledge proof systems, an important building block for larger cryptographic protocols. Recently, Hallgren et al.~\cite{HKSZ08} showed that any honest verifier zero-knowledge protocol can be made zero-knowledge against any classical and quantum verifier. Here we show a related result, namely, a simple transformation from non-interactive (quantum) zero-knowledge to interactive quantum zero-knowledge. A non-interactive zero-knowledge proof system can be trivially turned into an interactive {\em honest verifier} zero-knowledge proof system by just letting the verifier choose the reference string. Therefore, this consequence of our result also follows from~\cite{HKSZ08}. However, our proof is much simpler. In general, the difference between us and~\cite{HKSZ08} is that our focus is on establishing coin-flipping as a stand-alone tool that can be used in several contexts rather than being integrated in a zero-knowledge construction as in~\cite{HKSZ08}.

As second application we discuss the interactive generation of a common reference string for the general compiler construction improving the security of a large class of quantum protocols that was recently proposed in~\cite{DFLSS09}. Applying the compiler, it has been shown how to achieve hybrid security in existing protocols for password-based identification~\cite{DFSS07} and oblivious transfer~\cite{BBCS91} without significant efficiency loss, such that an adversary must have both large quantum memory \emph{and} large computing power to break the protocol. Here we show how a common reference string for the compiler can be generated from scratch according to the specific protocol requirements in~\cite{DFLSS09}.

Finally, we sketch an extended commitment scheme for quantum-secure coin-flipping in the common-reference-string model. This construction can be \emph{efficiently} simulated without the need of rewinding, which is necessary to claim universal composability.

%%% PRELININARIES %%%

\section{Preliminaries}

\subsection{Notation}

We assume the reader's familiarity with basic notation and concepts of quantum information processing as in standard literature, e.g.~\cite{NC00}. Furthermore, we will only give the details of the discussed applications that are most important in the context of this work. A full description of the applications can be found in the referenced papers. 

We denote by $\negl{n}$ any function of $n$, if for any polynomial $p$ it holds that $\negl{n} \leq 1/p(n)$ for large enough $n$. As a measure of \emph{closeness} of two quantum states $\rho$ and $\sigma$, their trace distance $\delta(\rho,\sigma) = \frac{1}{2} \tr(|\rho-\sigma|)$ or square-fidelity $\bra{\rho}\sigma\ket{\rho}$ can be applied. A quantum algorithm consists of a family $\{ C_n\}_{n \in \naturals}$ of quantum circuits and is said to run in polynomial time, if the number of gates of $C_n$ is polynomial in $n$. Two families of quantum states $\set{\rho_n}_{n \in \naturals}$ and $\set{\sigma_n}_{n \in \naturals}$ are called \emph{quantum-computationally indistinguishable}, denoted $\rho \approxq \sigma$, if any polynomial-time quantum algorithm has negligible advantage in $n$ of distinguishing $\rho_n$ from $\sigma_n$. Analogously, they are \emph{statistically indistinguishable}, denoted $\rho \approxs \sigma$, if their trace distance is negligible in~$n$. For the reverse circuit of quantum circuit $Q$, we use the standard notation for the transposed, complex conjugate operation, i.e.\ $Q^\dagger$. The controlled-NOT operation (CNOT) with a control and a target qubit as input flips the target qubit, if the control qubit is 1. In other words, the value of the second qubit corresponds to the classical exclusive disjunction (XOR). A phase-flip operation can be described by Pauli operator $Z$. For quantum state $\rho$ stored in register $\sf R$ we write $\ket{\rho}_R$.

\subsection{Definition of Security} 
\label{sec:security}

We follow the framework for defining security which was introduced in~\cite{FS09} and also used in~\cite{DFLSS09}. Our cryptographic two-party protocols run between player Alice, denoted by $\A$, and player Bob ($\B$). Dishonest parties are indicated by $\dA$ and $\dB$, respectively. The security against a dishonest player is based on the {\it real/ideal-world paradigm} that assumes two different worlds: The {\em real-world} that models the actual protocol $\Pi$ and the {\em ideal-world} based on the ideal functionality $\F$ that describes the intended behavior of the protocol. If both executions are indistinguishable, security of the protocol in real life follows. In other words, a dishonest real-world player $\dP$ that attacks the protocol cannot achieve (significantly) more than an ideal-world adversary $\dhP$ attacking the corresponding ideal functionality.

More formally, the joint input state consists of classical inputs of honest parties and possibly quantum input of dishonest players. A protocol $\Pi$ consists of an infinite family of interactive (quantum) circuits for parties $\A$ and $\B$. A classical (non-reactive) ideal functionality $\F$ is given by a conditional probability distribution $P_{\F(in_A,in_B)|in_A in_B}$, inducing a pair of random variables $(out_A,out_B) = \F(in_A,in_B)$ for every joint distribution of $in_A$ and $in_B$, where $in_P$ and $out_P$ denote party $\sf{P}$'s in- and output, respectively. For the definition of (quantum-) computational security against a dishonest Bob, a polynomial-size (quantum) \emph{input sampler} is considered, which produces the input state of the parties.

\begin{definition}[Correctness] \label{def:correctness}
A protocol $\Pi$ \emph{correctly implements} an ideal classical functionality $\F$, if for every distribution of the input values of honest Alice and Bob, the resulting common outputs of $\Pi$ and $\F$ are \emph{statistically indistinguishable}.
\end{definition}

\begin{definition}[Unconditional security against dishonest Alice] \label{def:unboundedAlice} 
A protocol $\Pi$ implements an ideal classical functionality $\F$ \emph{unconditionally securely} against dishonest Alice, if for any real-world adversary $\dA$, there exists an ideal-world adversary $\dhA$, such that for any input state it holds that the output state, generated by $\dA$ through interaction with honest $\B$ in the real-world, is \emph{statistically indistinguishable} from the output state, generated by $\dhA$ through interaction with $\F$ and $\dA$ in the ideal-world.
\end{definition}

\begin{definition}[(Quantum-) Computational security against dishonest Bob] \label{def:polyboundedBob} 
A protocol $\Pi$ implements an ideal classical functionality $\F$ \emph{(quantum-) computationally securely} against dishonest Bob, if for any (quantum-) computationally bounded real-world adversary $\dB$, there exists a (quantum-) computationally bounded ideal-world adversary $\ddhB$, such that for any efficient input sampler, it holds that the output state, generated by $\dB$ through interaction with honest $\A$ in the real-world, is \emph{(quantum-) computationally indistinguishable} from the output state, generated by $\ddhB$ through interaction with $\F$ and $\dB$ in the ideal-world.
\end{definition}

For more details and a definition of indistinguishability of quantum states, see~\cite{FS09}. There, it has also been shown that protocols satisfying the above definitions compose sequentially in a classical environment. Furthermore, note that in Definition~\ref{def:unboundedAlice}, we do not necessarily require the ideal-world adversary $\dhA$ to be efficient. We show in Section~\ref{sec:CFP-crs} how to extend our coin-flipping construction such that we can achieve an efficient simulator.

The coin-flipping scheme in Section~\ref{sec:CFP-crs} as well as the example applications in Sections~\ref{sec:IQZK} and~\ref{sec:DualProtocols} work in the common-reference-string (CRS) model. In this model, all participants in the real-world protocol have access to a classical public CRS, which is chosen before any interaction starts, according to a distribution only depending on the security parameter. However, the participants in the ideal-world interacting with the ideal functionality do not make use of the CRS. Hence, an ideal-world simulator $\dhP$ that operates by simulating a real-world adversary $\dP$ is free to choose a string in any way he wishes.

%%% COIN-FLIPPING %%%

\section{Quantum-Secure Coin-Flipping}

\subsection{The Coin-Flip Protocol}
\label{subsec:CFP}

Let $n$ indicate the security parameter of the commitment scheme which underlies the protocol. We use an {\it unconditionally binding} and {\it quantum-computationally hiding} commitment scheme that takes a bit and some randomness $r$ of length $l$ as input, i.e.\ $com: \{ 0,1 \} \times \{ 0,1 \}^l \rightarrow \{ 0,1 \}^{l+1}$. The unconditionally binding property is fulfilled, if it is impossible for any forger to open one commitment to both 0 and 1, i.e.\ to compute $r,r'$ such that $com(0,r) = com(1,r')$. Quantum-computationally hiding is ensured, if no quantum distinguisher can distinguish between $com(0,r)$ and $com(1,r')$ for random $r,r'$ with non-negligible advantage. As mentioned earlier, for a specific instantiation we can use, for instance, Naor's commitment based on a pseudorandom generator~\cite{Naor91}. This scheme does not require any initially shared secret information and is secure against a quantum distinguisher. \footnote{We describe the commitment scheme in this simple notation. However, if it is based on a specific scheme, e.g.~\cite{Naor91}, the precise notation has to be slightly adapted.} 

We let Alice and Bob run the $\CFProtocol$ (see Fig.~\ref{fig:CFP}), which interactively generates a random and fair $coin$ in one execution and does not require any set-up assumptions. Correctness is obvious by inspection of the protocol: If both players are honest, they independently choose random bits. These bits are then combined via exclusive disjunction, resulting in a uniformly random $coin$. 

\begin{figure}
\begin{framed}
$\CFProtocol$
\begin{enumerate}
\item $\A$ chooses $a \in_R \{0,1\}$ and computes $com(a,r)$. She sends $com(a,r)$ to $\B$.
\item $\B$ chooses $b \in_R \{0,1\}$ and sends $b$ to $\A$.
\item $\A$ sends $open(a,r)$ and $\B$ checks if the opening is valid.
\item Both compute $coin = a \oplus b$.
\end{enumerate}
\end{framed}
\vspace{-2ex}
\small
\caption{The Coin-Flip Protocol.}\label{fig:CFP}
\vspace{-1ex}
\end{figure}

The corresponding ideal coin-flip functionality $\cF$ is described in Figure~\ref{fig:cF}. Note that dishonest $\dA$ may refuse to open $com(a,r)$ in the real-world after learning $\B$'s input. For this case, $\cF$ allows her a second input $\refuse$, leading to output $\fail$ and modeling the abort of the protocol.

\begin{figure}
\begin{framed}
$\mathtt{Ideal \ Functionality} \ \cF$:\\[2ex]
Upon receiving requests $\start$ from Alice and Bob, $\cF$ outputs a uniformly random $coin$ to Alice. It then waits to receive Alice's second input $\ok$ or $\refuse$ and outputs $coin$ or $\fail$ to Bob, respectively.
\vspace{-1ex}
\end{framed}
\vspace{-2ex}
\small
\caption{The Ideal Coin-Flip Functionality.}\label{fig:cF}
\vspace{-1ex}
\end{figure}

\subsection{Security}
\label{subsec:SecurityCFP}

\begin{theorem}
The $\CFProtocol$ is \emph{unconditionally secure} against any unbounded dishonest Alice according to Definition~\ref{def:unboundedAlice}, provided that the underlying commitment scheme is unconditionally binding.
\end{theorem}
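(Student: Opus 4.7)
The plan is to construct an unbounded ideal-world simulator $\dhA$ that runs $\dA$ internally as a black box and perfectly (or statistically) reproduces $\dA$'s real-world output distribution while interacting only with $\cF$. The core observation is that in the real protocol the coin is determined only after Alice commits, and since the commitment is unconditionally binding, Alice's choice is effectively pinned down at that moment; this lets the simulator first extract Alice's bit, then bias Bob's message so that the simulated execution produces exactly the coin handed down by $\cF$.

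Concretely, I would proceed as follows. First, $\dhA$ invokes $\dA$ and obtains the first-message commitment $c$. Using its unlimited computing power, $\dhA$ searches exhaustively for a bit $a^*$ and randomness $r^*$ with $c = com(a^*, r^*)$; by the unconditional binding property, such an $a^*$ is unique if it exists (otherwise $\dhA$ fixes $a^* := 0$, as $\dA$ will in any case be unable to open $c$ later). Next, $\dhA$ sends $\start$ to $\cF$, receives a uniformly random $coin$, computes $b := coin \oplus a^*$, and feeds $b$ to $\dA$ as Bob's second-round message. Finally, $\dhA$ lets $\dA$ produce its opening; if the opening is valid (and hence necessarily opens to $a^*$, by binding), $\dhA$ forwards $\ok$ to $\cF$, otherwise $\refuse$. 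The simulator outputs whatever $\dA$ outputs.

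To show indistinguishability of the real and ideal output states, I would argue that the joint distribution of $\dA$'s view and $\B$'s output in the real world is identical to that in the simulation. In the real world $\B$ picks $b$ uniformly at random independently of $c$, and $coin = a^* \oplus b$ where $a^*$ is the unique bit to which $c$ can be opened (or $coin$ is irrelevant if the opening fails, because $\B$ outputs $\fail$). In the ideal world $coin$ is uniform and $b = a^* \oplus coin$, so $b$ is again uniform and independent of $c$, and $\B$'s output in the real world matches $\cF$'s output in the ideal world whenever the opening succeeds or fails, respectively. Hence $\dA$ receives the same message distribution in both worlds, and its output states coincide.

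The only subtle point — and the main thing to get right — is the behaviour of the extractor when $c$ has no valid opening at all. The fix is clean: by the unconditional binding property, at most one bit can ever be opened, so the extracted $a^*$ is well-defined whenever it matters, and when it does not exist the honest $\B$'s output is $\fail$ with probability $1$ regardless of $b$, which $\dhA$ faithfully reproduces via $\refuse$. Note also that the simulator is not required to be efficient here, consistent with Definition~\ref{def:unboundedAlice}; the efficiency question is deferred to the CRS-model extension in Section~\ref{sec:CFP-crs}.
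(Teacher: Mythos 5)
Your proof takes essentially the same approach as the paper: the unbounded simulator brute-forces the unique opening $(a^*,r^*)$ of the commitment (well-defined by unconditional binding), obtains $coin$ from $\cF$, feeds $b=coin\oplus a^*$ to $\dA$, and outputs whatever $\dA$ outputs; indistinguishability follows because $b$ is uniform and independent of the commitment in both worlds. In fact you are slightly more careful than the paper's Figure~4, which has $\dhA$ send $\ok$ to $\cF$ up front before seeing whether $\dA$ actually opens --- your simulator correctly defers the $\ok$/$\refuse$ decision until after $\dA$'s last message, and you also explicitly dispatch the corner case where the commitment has no valid opening at all, both of which the paper glosses over.
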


\begin{proof}
We construct an ideal-world adversary $\dhA$, such that the real output of the protocol is statistically indistinguishable from the ideal output produced by $\dhA$, $\cF$ and $\dA$.

\begin{figure}
\begin{framed}
$\mathtt{Ideal-World \ Simulation \ \ \dhA}$:
\begin{enumerate}
\item Upon receiving $com(a,r)$ from $\dA$, $\dhA$ sends $\start$ and then $\ok$ to $\cF$ as first and second input, respectively, and receives a uniformly random $coin$.
\item\label{step:compute-a} $\dhA$ computes $a$ and $r$ from $com(a,r)$. 
\item\label{step:compute-b} $\dhA$ computes $b = coin \oplus a$ and sends $b$ to $\dA$.
\item $\dhA$ waits to receive $\dA$'s last message and outputs whatever $\dA$ outputs.
\end{enumerate}
\vspace{-1ex}
\end{framed}
\vspace{-2ex}
\small
\caption{The Ideal-World Simulation $\dhA$.}\label{fig:simulationA}
\vspace{-1ex}
\end{figure}

First note that $a, r$ and $com(a,r)$ are chosen and computed as in the real protocol. From the statistically binding property of the commitment scheme, it follows that $\dA$'s choice bit $a$ is uniquely determined from $com(a,r)$, since for any $com$, there exists at most one pair $(a,r)$ such that $com = com(a,r)$ (except with probability negligible in $n$). Hence in the real-world, $\dA$ is unconditionally bound to her bit before she learns $\B$'s choice bit, which means $a$ is independent of $b$. Therefore in Step~\ref{step:compute-a}, the simulator can correctly (but not necessarily efficiently) compute $a$ (and $r$). Note that, in the case of unconditional security, we do not have to require the simulation to be efficient. We show in Section~\ref{sec:CFP-crs} how to extend the commitment in order to extract $\dA$'s inputs efficiently. Finally, due to the properties of XOR, $\dA$ cannot tell the difference between the random $b$ computed (from the ideal, random $coin$) in the simulation in Step~\ref{step:compute-b} and the randomly chosen $b$ of the real-world. It follows that the simulated output is statistically indistinguishable from the output in the real protocol.
\qed
\end{proof}

To prove security against any dishonest quantum-computationally bounded $\dB$, we show that there exists an ideal-world simulation $\ddhB$ with output quantum-computationally indistinguishable from the output of the protocol in the real-world. In a classical simulation, where we can simply use rewinding, a polynomial-time simulator works as follows. It inquires $coin$ from $\cF$, chooses random $a$ and $r$, and computes $b' = coin \oplus a$ as well as $com(a,r)$. It then sends $com(a,r)$ to $\dB$ and receives $\dB$'s choice bit $b$. If $b = b'$, the simulation was successful. Otherwise, the simulator rewinds $\dB$ and repeats the simulation. Note that our security proof should hold also against any quantum adversary. The polynomial-time quantum simulator proceeds similarly to its classical analogue but requires quantum registers as work space and relies on the {\it quantum rewinding lemma} of Watrous~\cite{Watrous09} (see Lemma~\ref{lem:qrewind}).

In the paper, Watrous proves how to construct a quantum zero-knowledge proof system for graph isomorphism using his (ideal) quantum rewinding lemma. The protocol proceeds as a $\Sigma$-protocol, i.e.\ a protocol in three-move form, where the verifier flips a single coin in the second step and sends this challenge to the prover. Since these are the essential aspects also in our $\CFProtocol$, we can apply Watrous' quantum rewinding technique (with slight modifications) as a black-box to our protocol. We also follow his notation and line of argument here. For a more detailed description and proofs, we refer to~\cite{Watrous09}.

\begin{theorem}
The $\CFProtocol$ is \emph{quantum-computationally secure} against any \\ polynomial-time bounded, dishonest Bob according to Definition~\ref{def:polyboundedBob}, provided that the underlying commitment scheme is quantum-computationally hiding and the success probability of quantum rewinding achieves a non-negligible lower bound $p_0$.
\end{theorem}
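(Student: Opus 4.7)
The plan is to construct an efficient quantum simulator $\ddhB$ along the lines of the classical rewinding strategy, but implemented coherently so that Watrous' lemma can be invoked. The simulator first queries $\cF$ to receive a uniformly random $coin$. It then prepares a polynomial-size quantum circuit $Q$ that internally prepares a uniform superposition over $(a,r) \in \{0,1\} \times \{0,1\}^l$, computes $com(a,r)$ coherently, runs $\dB$ coherently on this commitment together with $\dB$'s auxiliary quantum input, and measures the register in which $\dB$ classically outputs $b$. Success is declared on the event $b \oplus a = coin$; on success the simulator will eventually forward the classical opening $open(a,r)$ to $\dB$ and output whatever $\dB$ outputs.

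To invoke Watrous' quantum rewinding lemma, two conditions on $Q$ must be verified: the success probability $p$ must be bounded below by a non-negligible $p_0$, and $p$ must be essentially independent of $\dB$'s auxiliary input so that the amplification behaves uniformly. Both facts will follow from the quantum-computational hiding of $com$. Concretely, suppose for some auxiliary state $\Pr[b \oplus a = coin]$ differed from $1/2$ by a non-negligible amount. Since $coin$ is uniform and independent of $(a,r)$ inside the simulator, this would mean that the measured bit $b$ is non-negligibly correlated with the committed bit $a$. Running $\dB$ on a commitment to a fixed known bit would then yield an efficient quantum distinguisher between $com(0,\cdot)$ and $com(1,\cdot)$ with non-negligible advantage, contradicting hiding. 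A closely related hybrid argument shows that $p$ agrees, up to $\negl{n}$, on any two efficiently preparable auxiliary inputs. Hence $p = 1/2 - \negl{n}$, which exceeds any required $p_0$ and is input-independent in the sense demanded by the lemma.

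The lemma then yields, in time polynomial in $1/p_0$ and $n$, a state whose trace distance from the post-measurement state of $Q$ conditioned on success is exponentially small. The simulator $\ddhB$ takes this amplified state, completes the interaction by sending the corresponding $open(a,r)$ to $\dB$, and outputs whatever $\dB$ outputs. It then remains to observe that this joint output is quantum-computationally indistinguishable from $\dB$'s output in a genuine interaction with honest $\A$. In the real protocol, $\A$ independently samples $(a,r)$ and $\B$ independently chooses $b$; once we condition on the event $a \oplus b = coin$ for a uniform $coin$, the joint distribution of $(a,r,b)$ matches the one obtained in the successful branch of the simulator, so the only remaining discrepancy is the negligible approximation error from the rewinding lemma, yielding the $\approxq$-relation required by Definition~\ref{def:polyboundedBob}.

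The main obstacle I anticipate is the careful verification of the near-independence of $p$ from $\dB$'s auxiliary quantum input, which is the nontrivial ingredient Watrous' lemma requires beyond a mere lower bound on $p$. Reducing a hypothetical input-dependent bias to a hiding distinguisher is delicate because the adversary and the distinguisher share an arbitrary, possibly entangled, auxiliary state, and one has to package the coherent commitment-receiving-and-measuring interaction into a single-query distinguisher against the commitment scheme. Once this is handled cleanly, the remaining steps---invoking the rewinding lemma as a black box and comparing output distributions conditioned on success---are essentially mechanical.
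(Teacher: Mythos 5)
Your proposal follows essentially the same route as the paper's proof: a coherent simulation over all $(a,r)$ with a guessed response $b' = coin \oplus a$, an application of Watrous' quantum rewinding lemma with perturbations as a black box, and a reduction to the quantum-computational hiding of $com$ to argue that $|p - \tfrac{1}{2}|$ is negligible, which is exactly what the lemma's near-independence hypothesis requires. One caution on the construction of $Q$: Watrous' lemma requires $Q$ to be \emph{unitary}, with the only measurement being of a single success qubit, so you should not literally ``measure the register in which $\dB$ outputs $b$'' inside $Q$. Instead, as the paper does, coherently CNOT $\dB$'s response register against the guess register holding $b' = coin \oplus a$ (so a flag register $G$ ends up holding $b \oplus b'$) and measure only $G$; this is essential both so that $Q^\dagger$ can be applied on failure for the rewind, and so that the residual superposition over openings $r$ consistent with the successful outcome is preserved for the simulator's final output.
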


\begin{proof}
Let $W$ denote $\dB$'s auxiliary input register, containing an $\tilde{n}$-qubit state $\ket{\psi}$. Furthermore, let $V$ and $B$ denote $\dB$'s work space, where $V$ is an arbitrary polynomial-size register and $B$ is a single qubit register. $\A$'s classical messages are considered in the following as being stored in quantum registers $A_1$ and $A_2$. In addition, the quantum simulator uses registers $R$, containing all possible choices of a classical simulator, and $G$, representing its guess $b'$ on $\dB$'s message $b$ in the second step. Finally, let $X$ denote a working register of size $\tilde{k}$, which is initialized to the state $\ket{0^{\tilde{k}}}$ and corresponds to the collection of all registers as described above except $W$.

The quantum rewinding procedure is implemented by a general quantum circuit $R_{coin}$ with input $(W,X, \dB, coin)$. As a first step, it applies a unitary $(\tilde{n},\tilde{k})$-quantum circuit $Q$ to $(W,X)$ to simulate the conversation, obtaining registers $(G,Y)$. Then, a test takes place to observe whether the simulation was successful. In that case, $R_{coin}$ outputs the resulting quantum register. Otherwise, it {\it quantumly rewinds} by applying the reverse circuit $Q^\dag$ on $(G,Y)$ to retrieve $(W,X)$ and then a phase-flip transformation on $X$ before another iteration of $Q$ is applied. Note that $R_{coin}$ is essentially the same circuit as $R$ described in~\cite{Watrous09}, but in our application it depends on the value of a given $coin$, i.e., we apply $R_0$ or $R_1$ for $coin = 0$ or $coin = 1$, respectively. In more detail, $Q$ transforms $(W,X)$ to $(G,Y)$ by the following unitary operations:
\begin{enumerate}
\item[(1)] It first constructs the superposition
$$\frac{1}{\sqrt{2^{l+1}}} \sum_{a,r} \ket{a,r}_{R} \ket{com(a,r)}_{A_1} \ket{b' = coin \oplus a}_{G} \ket{open(a,r)}_{A_2} \ket{0}_{B} \ket{0^{\tilde{k'}}}_{V} \ket{\psi}_{W} \, ,$$
where $\tilde{k'} < \tilde{k}$. Note that the state of registers $(A_1, G, A_2)$ corresponds to a uniform distribution of possible transcripts of the interaction between the players.
\item[(2)] For each possible $com(a,r)$, it then simulates $\dB$'s possible actions by applying a unitary operator to $(W,V,B,A_1)$ with $A_1$ as control:
$$\frac{1}{\sqrt{2^{l+1}}} \sum_{a,r} \ket{a,r}_{R} \ket{com(a,r)}_{A_1} \ket{b'}_{G} \ket{open(a,r)}_{A_2} \ket{b}_{B} \ket{\tilde{\phi}}_{V} \ket{\tilde{\psi}}_{W} \, ,$$
where ${\tilde{\phi}}$ and ${\tilde{\psi}}$ describe modified quantum states.
\item[(3)] Finally, a CNOT-operation is applied to pair $(B,G)$ with $B$ as control to check whether the simulator's guess of $\dB$'s choice was correct. The result of the CNOT-operation is stored in register $G$.
$$\frac{1}{\sqrt{2^{l+1}}} \sum_{a,r} \ket{a,r}_{R} \ket{com(a,r)}_{A_1} \ket{b' \oplus b}_{G} \ket{open(a,r)}_{A_2} \ket{b}_{B} \ket{\tilde{\phi}}_{V} \ket{\tilde{\psi}}_{W} \, .$$
\end{enumerate}
If we denote with $Y$ the register that contains the residual $\tilde{n}+\tilde{k}-1$ -qubit state, the transformation from $(W,X)$ to $(G,Y)$ by applying $Q$ can be written as
$$Q \left( \ket{\psi}_{W} \ket{0^{\tilde{k}}}_{X} \right) = \sqrt{p} \ket{0}_{G} \ket{\phi_{good}(\psi)}_{Y} + \sqrt{1-p} \ket{1}_{G} \ket{\phi_{bad}(\psi)}_{Y}\, ,$$
where $0 < p < 1$ and $\ket{\phi_{good}(\psi)}$ denotes the state, we want the system to be in for a successful simulation. $R_{coin}$ then measures the qubit in register $G$ with respect to the standard basis, which indicates success or failure of the simulation. A successful execution (where $b = b'$) results in outcome 0 with probability $p$. In that case, $R_{coin}$ outputs $Y$. A measurement outcome 1 indicates $b \neq b'$, in which case $R_{coin}$ quantumly rewinds the system, applies a phase-flip (on register $X$) and repeats the simulation, i.e.
$$Q \bigg( 2 \Big( \mathbb{I} \otimes \ket{0^{\tilde{k}}}\bra{0^{\tilde{k}}} \Big) - \mathbb{I} \bigg) Q^\dag \, .$$

Watrous' ideal quantum rewinding lemma (without perturbations) then states the following: Under the condition that the probability $p$ of a successful simulation is non-negligible and independent of any auxiliary input, the output $\rho(\psi)$ of $R$ has square-fidelity close to 1 with state $\ket{\phi_{good}(\psi)}$ of a successful simulation, i.e.,
$$\bra{\phi_{good}(\psi)}\rho(\psi)\ket{\phi_{good}(\psi)} \geq 1 - \varepsilon$$
with error bound $0 < \varepsilon < \frac{1}{2}$. Note that for the special case where $p$ equals $1/2$ and is independent of $\ket{\psi}$, the simulation terminates after at most one rewinding. 

However, we cannot apply the exact version of Watrous' rewinding lemma in our simulation, since the commitment scheme in the protocol is only (quantum-) computationally hiding. Instead, we must allow for small perturbations in the quantum rewinding procedure as follows. Let $adv$ denote $\dB$'s advantage over a random guess on the committed value due to his computing power, i.e.\ $adv = |p - 1/2|$. From the hiding property, it follows that $adv$ is negligible in the security parameter $n$. Thus, we can argue that the success probability $p$ is {\it close} to independent of the auxiliary input and Watrous' quantum rewinding lemma with small perturbations, as stated below (Lemma~\ref{lem:qrewind}) applies with $q = \frac{1}{2}$ and $\varepsilon = adv$. 

\begin{lemma}[Quantum Rewinding Lemma with small perturbations~\cite{Watrous09}]
\label{lem:qrewind}
Let $Q$ be the unitary $(\tilde{n},\tilde{k})$-quantum circuit as given
in~\cite{Watrous09}. Furthermore, let $p_0, q \in (0,1)$ and $\varepsilon \in (0,\frac{1}{2})$ be real numbers such that
	\begin{enumerate}
	\item $|p - q| < \varepsilon$
	\item $p_0(1-p_0) \leq q(1-q)$, and
	\item $p_0 \leq p$ 
	\end{enumerate}
for all $\tilde{n}$-qubit states $\ket{\psi}$. Then there exists a general quantum circuit $R$ of size
$$O \left( \frac{log(1/\varepsilon) size(Q)}{p_0(1-p_0)} \right)$$
such that, for every $\tilde{n}$-qubit state $\ket{\psi}$, the output $\rho(\psi)$ of $R$ satisfies
$$\bra{\phi_{good}(\psi)}\rho(\psi)\ket{\phi_{good}(\psi)} \geq 1 - \varepsilon'$$
where $\varepsilon' = 16\varepsilon \frac{log^2(1/\varepsilon)}{p_0^2 (1-p_0)^2}\ $.
\end{lemma}

Note that all operations in $Q$ can be performed by polynomial-size circuits, and thus, the simulator has polynomial size (in the worst case). Furthermore, $p_0$ denotes the lower bound on the success probability $p$, for which the procedure guarantees correctness. For negligible $\varepsilon$ but non-negligible $p_0$, it follows that $\varepsilon'$ is negligible, and hence, the ``closeness'' of output $\rho(\psi)$ with good state $\ket{\phi_{good}(\psi)}$ is slightly reduced but quantum rewinding remains possible. For a more detailed description of the lemma and the corresponding proofs, we refer to~\cite{Watrous09}.

Finally, to proof security against quantum $\dB$, we construct an ideal-world quantum simulator $\ddhB$ (see Fig.~\ref{fig:simulationB}), interacting with $\dB$ and the ideal functionality $\cF$ and executing Watrous' quantum rewinding algorithm. We then compare the output states of the real process and the ideal process. In case of indistinguishable outputs, quantum-computational security against $\dB$ follows.

\begin{figure}
\begin{framed}
$\mathtt{Ideal-World \ Simulation \ \ \ddhB}$:
\begin{enumerate}
\item $\ddhB$ gets $\dB$'s auxiliary quantum input $W$ and working registers $X$.
\item\label{step:get-coin} $\ddhB$ sends $\start$ and then $\ok$ to $\cF$. It receives a uniformly random $coin$.
\item Depending on the value of $coin$, $\ddhB$ applies the corresponding circuit $R_{coin}$ with input $W, X ,\dB$ and $coin$.
\item $\ddhB$ receives output register $Y$ with $\ket{\phi_{good}(\psi)}$ and ``measures the conversation'' to retrieve the corresponding $(com(a,r),b,open(a,r))$. It outputs whatever $\dB$ outputs.
\end{enumerate}
\vspace{-1ex}
\end{framed}
\vspace{-2ex}
\small
\caption{The Ideal-World Simulation $\ddhB$.}\label{fig:simulationB}
\vspace{-1ex}
\end{figure}

First note that the superposition constructed as described above in circuit $Q$ as Step~(1) corresponds to all possible random choices of values in the real protocol. Furthermore, the circuit models any possible strategy of quantum $\dB$ in Step~(2), depending on control register $\ket{com(a,r)}_{A_1}$. The CNOT-operation on $(B,G)$ in Step~(3), followed by a standard measurement of $G$, indicate whether the guess $b'$ on $\dB$'s choice $b$ was correct. If that was not the case (i.e.\ $b \neq b'$ and measurement result 1), the system gets quantumly rewound by applying reverse transformations (3)-(1), followed by a phase-flip operation. The procedure is repeated until the measurement outcome is 0 and hence $b=b'$. Watrous' technique then guarantees that, assuming negligible $\varepsilon$ and non-negligible $p_0$, then $\varepsilon'$ is negligible and thus, the final output $\rho(\psi)$ of the simulation is close to good state $\ket{\phi_{good}(\psi)}$. It follows that the output of the ideal simulation is indistinguishable from the output in the real-world for any quantum-computationally bounded $\dB$.
\qed
\end{proof}

%%% APPLICATIONS %%%

\section{Applications}
\label{sec:applications}

\subsection{Interactive Quantum Zero-Knowledge}
\label{sec:IQZK}

Zero-knowledge proofs are an important building block for larger cryptographic protocols. The notion of (interactive) zero-knowledge (ZK) was introduced by Goldwasser et al.~\cite{GMR85}. Informally, ZK proofs for any NP language $L$ yield no other knowledge to the verifier than the validity of the assertion proved, i.e.\ $x \in L$. Thus, only this one bit of knowledge is communicated from prover to verifier and zero additional knowledge. For a survey about zero-knowledge, see for instance~\cite{Goldreich01,Goldreich02}.

Blum et al.~\cite{BFM88} showed that the interaction between prover and verifier in any ZK proof can be replaced by sharing a short, random common reference string according to some distribution and available to all parties from the start of the protocol. Note that a CRS is a weaker requirement than interaction. Since all information is communicated mono-directional from prover to verifier, we do not have to require any restriction on the verifier.

As in the classical case, where ZK protocols exist if one-way functions exist, quantum zero-knowledge (QZK) is possible under the assumption that quantum one-way functions exist. In~\cite{K03}, Kobayashi showed that a common reference string or shared entanglement is necessary for non-interactive quantum zero-knowledge. Interactive quantum zero-knowledge protocols in restricted settings were proposed by Watrous in the honest verifier setting~\cite{Watrous02} and by Damg{\aa}rd et al.\ in the CRS model~\cite{DFS04}, where the latter introduced the first $\Sigma$-protocols for QZK withstanding even active quantum attacks. In~\cite{Watrous09}, Watrous then proved that several interactive protocols are zero-knowledge against general quantum attacks.

Recently, Hallgren et al.~\cite{HKSZ08} showed how to transform a $\Sigma$-protocol with stage-by-stage honest verifier zero-knowledge into a new $\Sigma$-protocol that is zero-knowledge against all classical and quantum verifiers. They propose special bit commitment schemes to limit the number of rounds, and view each round as a stage in which an honest verifier simulator is assumed. Then, by using a technique of~\cite{DGW94}, each stage can be converted to obtain zero-knowledge against any classical verifier. Finally, Watrous' quantum rewinding lemma is applied in each stage to prove zero-knowledge also against any quantum verifier.

Here, we propose a simpler transformation from non-interactive (quantum) zero-knowledge (NIZK) to interactive quantum zero-knowledge (IQZK) by combining the $\CFProtocol$ with any $\NIZKProtocol$. Our coin-flipping generates a truly random $coin$ even in the case of a malicious quantum verifier. A sequence of such coins can then be used in any subsequent $\NIZKProtocol$, which is also secure against quantum verifiers, due to its mono-direction. Here, we define a ($\NIZK$)-subprotocol as given in~\cite{BFM88}: Both parties $\A$ and $\B$ get common input $x$. A common reference string $\crs$ of size $k$ allows the prover $\A$, who knows a witness $w$, to give a non-interactive zero-knowledge proof $\pi(\crs,x)$ to a (quantum-) computationally bounded verifier $\B$. By definition, the ($\NIZK$)-subprotocol is complete and sound and satisfies zero-knowledge.

The $\IQZK$ is shown in Figure~\ref{fig:iqzk}. To prove that it is an interactive quantum zero-knowledge protocol, we first construct an intermediate $\IQZKF$ (see Fig.~\ref{fig:iqzkf}) that runs with the ideal functionality $\cF$. Then we prove that the $\IQZKF$ satisfies completeness, soundness and zero-knowledge according to standard definitions. Finally, by replacing the calls to $\cF$ with our $\CFProtocol$, we can complete the transformation to the final $\IQZK$.\\

\begin{myfigure}{h}
\begin{myprotocol}{$\IQZKF$}
\item[$(\COIN)$]
\item\label{step:coin} $\A$ and $\B$ invoke $\cF$ $k$ times. If $\A$ blocks any output $coin_i$ for $i = 1,\ldots,k$ (by sending $\refuse$ as second input), $\B$ aborts the protocol.
\item[$(\CRS)$]
\item\label{step:crs} $\A$ and $\B$ compute $\crs = coin_1 \ldots coin_k$.\medskip
\item[$(\NIZK)$]
\item $\A$ sends $\pi(\crs,x)$ to $\B$. $\B$ checks the proof and accepts or rejects accordingly.
\end{myprotocol}
\caption{Intermediate Protocol for IQZK.}\label{fig:iqzkf}
\end{myfigure}

\noindent\emph{\textbf{Completeness:} If $x \in L$, the probability that $(\A,\B)$ rejects $x$ is negligible in the length of $x$.}

From the ideal functionality $\cF$ it follows that each $coin_i$ in Step~\ref{step:coin} is uniformly random for all $i = 1,\ldots,k$. Hence, $\crs$ in Step~\ref{step:crs} is a uniformly random common reference string of size $k$. By definition of any ($\NIZK$)-subprotocol, we have acceptance probability
$$Pr[\crs \in_R \{ 0,1 \}^k, \pi(\crs,x) \leftarrow A(\crs,x,w): B(\crs,x,\pi(\crs,x)) =1] > 1 - \varepsilon'' , $$ 
where $\varepsilon''$ is negligible in the length of $x$. Thus, completeness for the $\IQZKF$ follows.\\

\noindent\emph{\textbf{Soundness:} If $x \notin L$, then for any unbounded prover $\dA$, the probability that $(\dA,\B)$ accepts $x$ is negligible in the length of $x$.}

Any dishonest $\dA$ might stop the $\IQZKF$ at any point during execution. For example, she can block the output in Step~\ref{step:coin} or she can refuse to send a proof $\pi$ in the ($\NIZK$)-subprotocol. Furthermore, $\dA$ can use an invalid $\crs$ (or $x$) for $\pi$. In all of these cases, $\B$ will abort without even checking the proof. Therefore, $\dA$'s best strategy is to ``play the entire game'', i.e.\ to execute the entire $\IQZKF$ without making obvious cheats.

$\dA$ can only convince $\B$ in the ($\NIZK$)-subprotocol of a $\pi$ for any given (i.e.\ normally generated) $\crs$ with negligible probability 
$$Pr[\crs \in_R \{ 0,1 \}^k, \pi(\crs,x) \leftarrow A^*(\crs,x): B(\crs,x,\pi(\crs,x)) =1] \ . $$
Therefore, the probability that $\dA$ can convince $\B$ in the entire $\IQZKF$ in case of $x \notin L$ is also negligible (in the length of $x$) and its soundness follows.\\

\noindent\emph{\textbf{Zero-Knowledge:} An interactive proof system $(\A,\dB)$ for language $L$ is quantum zero-knowledge, if for any quantum verifier $\dB$, there exists a simulator $\SIQZKF$, such that $\SIQZKF \approxq (\A,\dB)$ on common input $x \in L$ and arbitrary additional (quantum) input to $\dB$.}

We construct simulator $\SIQZKF$, interacting with dishonest $\dB$ and simulator $\SNIZK$. Under the assumption on the zero-knowledge property of any $\NIZKProtocol$, there exists a simulator $\SNIZK$ that, on input $x \in L$, generates a randomly looking $\crs$ together with a valid proof $\pi$ for $x$ (without knowing witness $w$). $\SIQZKF$ is described in Figure~\ref{fig:simulationZKF}. It receives a random string $\omega$ from $\SNIZK$, which now replaces the string of coins produced by the calls to $\cF$ in the $\IQZKF$. The ``merging'' of coins into $\crs$ in Step~\ref{step:crs} of the protocol (Fig.~\ref{fig:iqzkf}) is equivalent to the ``splitting'' of $\crs$ into coins in Step~\ref{step:crs-sim} of the simulation (Fig.~\ref{fig:simulationZKF}). Thus, the simulated proof $\pi(\crs,x)$ is indistinguishable from a real proof, which shows that the $\IQZKF$ is zero-knowledge.\\

\begin{figure}
\begin{framed}
$\SIQZKF$:
\begin{enumerate}
\item $\SIQZKF$ gets input $x$.
\item\label{step:nizk-sim} It invokes $\SNIZK$ with $x$ and receives $\pi(\crs, x)$.
\item\label{step:crs-sim} Let $\crs = coin_1 \ldots coin_k$. $\SIQZKF$ sends each $coin_i$ one by one to $\dB$.
\item $\SIQZKF$ sends $\pi(\crs, x)$ to $\dB$ and outputs whatever $\dB$ outputs. 
\end{enumerate}
\vspace{-1ex}
\end{framed}
\vspace{-2ex}
\small
\caption{The Simulation of the Intermediate Protocol for IQZK.}\label{fig:simulationZKF}
\vspace{-1ex}
\end{figure}
\medskip
\begin{myfigure}{h}
\begin{myprotocol}{$\IQZK$}
\item[($\CFP$)] For all $i = 1,\ldots,k$ repeat Steps 1. -- 4.
\item $\A$ chooses $a_i \in_R \{0,1\}$ and computes $com(a_i,r_i)$. She sends $com(a_i,r_i)$ to $\B$.
\item $\B$ chooses $b_i \in_R \{0,1\}$ and sends $b_i$ to $\A$.
\item $\A$ sends $open(a_i,r_i)$ and $\B$ checks if the opening is valid.
\item Both compute $coin_i = a_i \oplus b_i$.\medskip
\item[$(\CRS)$]
\item $\A$ and $\B$ compute $\crs = coin_1 \ldots coin_k$.\medskip
\item[$(\NIZK)$]
\item $\A$ sends $\pi(\crs,x)$ to $\B$. $\B$ checks the proof and accepts or rejects accordingly.
\end{myprotocol}
\caption{Interactive Quantum Zero-Knowledge.}\label{fig:iqzk}
\end{myfigure}

It would be natural to think that the $\IQZK$ could be proved secure simply by showing that the $\IQZKF$ implements some appropriate functionality and then use the composition theorem from~\cite{FS09}. Unfortunately, a zero-knowledge protocol -- which is not necessarily a proof of knowledge -- cannot be modeled by a functionality in a natural way. We therefore instead prove explicitly that the $\IQZK$ has the standard properties of a zero-knowledge proof as follows.\\

\noindent\emph{\textbf{Completeness:}}
From the analysis of the $\CFProtocol$ and its indistinguishability from the ideal functionality $\cF$, it follows that if both players honestly choose random bits, each $coin_i$ for all $i = 1,\ldots,k$ in the ($\CFP$)-subprotocol is generated uniformly at random. Thus, $\crs$ is a random common reference string of size $k$ and the acceptance probability of the ($\NIZK$)-subprotocol as given above holds. Completeness for the $\IQZK$ follows.\\

\noindent\emph{\textbf{Soundness:}}
Again, we only consider the case where $\dA$ executes the entire protocol without making obvious cheats, since otherwise, $\B$ immediately aborts. Assume that $\dA$ could cheat in the $\IQZK$, i.e., $\B$ would accept an invalid proof with non-negligible probability. Then we could combine $\dA$ with simulator $\dhA$ of the $\CFProtocol$ (Fig.~\ref{fig:simulationA}) to show that the $\IQZKF$ was not sound. This, however, is inconsistent with the previously given soundness argument and thus proves by contradiction that the $\IQZK$ is sound.\\

\noindent\emph{\textbf{Zero-Knowledge:}} 
A simulator $\SIQZK$ can be composed of simulator $\SIQZKF$ (Fig.~\ref{fig:simulationZKF}) and simulator $\ddhB$ for the $\CFProtocol$ (Fig.~\ref{fig:simulationB}). $\SIQZK$ gets classical input $x$ as well as quantum input $W$ and $X$. It then receives a valid proof $\pi$ and a random string $\omega$ from $\SNIZK$. As in $\SIQZKF$, $\omega$ is split into $coin_1 \ldots coin_k$. For each $coin_i$, it will then invoke $\ddhB$ to simulate one coin-flip execution with $coin_i$ as result. In other words, whenever $\ddhB$ asks $\cF$ to output a bit (Step~\ref{step:get-coin}, Fig.~\ref{fig:simulationB}), it instead receives this $coin_i$. The transcript of the simulation, i.e.\ $\pi(\crs,x)$ as well as $(com(a_i,r_i), b_i, open(a_i, r_i))$ $\forall i = 1, \ldots, k$ and $\crs = coin_1 \ldots coin_k$, is indistinguishable from the transcript of the $\IQZK$ for any quantum-computationally bounded $\dB$, which concludes the zero-knowledge proof.

\subsection{Generating Commitment Keys for Improved Quantum Protocols}
\label{sec:DualProtocols}

Recently, Damg{\aa}rd et al.~\cite{DFLSS09} proposed a general compiler for improving the security of a large class of quantum protocols. Alice starts such protocols by transmitting random BB84-qubits to Bob who measures them in random bases. Then some classical messages are exchanged to accomplish different cryptographic tasks. The original protocols are typically unconditionally secure against cheating Alice, and secure against a so-called \emph{benignly} dishonest Bob, i.e., Bob is assumed to handle most of the received qubits as he is supposed to. Later on in the protocol, he can deviate arbitrarily. The improved protocols are then secure against an arbitrary computationally bounded (quantum) adversary. The compilation also preserves security in the bounded-quantum-storage model (BQSM) that assumes the quantum storage of the adversary to be of limited size. If the original protocol was BQSM-secure, the improved protocol achieves hybrid security, i.e., it can only be broken by an adversary who has large quantum memory {\it and} large computing power. 

Briefly, the argument for computational security proceeds along the following lines. After the initial qubit transmission from $\A$ to $\B$, $\B$ commits to all his measurement bases and outcomes. The (keyed) dual-mode commitment scheme that is used must have the special properties that the key can be generated by one of two possible key-generation algorithms: $\GH$ or $\GB$. Depending of the key in use, the scheme provides both flavors of security. Namely, with key $\pkH$ generated by $\GH$, respectively $\pkB$ produced by $\GB$, the commitment scheme is unconditionally hiding respectively unconditionally binding. Furthermore, the scheme is secure against a quantum adversary and it holds that $\pkH \approxq \pkB$. The commitment construction is described in full detail in \cite{DFLSS09}.

In the real-life protocol, $\B$ uses the unconditionally hiding key $\pkH$ to maintain unconditional security against any unbounded $\dA$. To argue security against a computationally bounded $\dB$, an information-theoretic argument involving simulator $\dhB$ (see~\cite{DFLSS09}) is given to prove that $\dB$ cannot cheat with the unconditionally binding key $\pkB$. Security in real life then follows from the quantum-computational indistinguishability of $\pkH$ and $\pkB$.

The CRS model is assumed to achieve high efficiency and practicability. Here, we discuss integrating the generation of a common reference string from scratch based on our quantum-secure coin-flipping. Thus, we can implement the {\em entire process} in the quantum world, starting with the generation of a CRS without any initially shared information and using it during compilation as commitment key.\footnote{Note that implementing the entire process comes at the cost of a non constant-round construction, added to otherwise very efficient protocols under the CRS-assumption.}

As mentioned in~\cite{DFLSS09}, a dual-mode commitment scheme can be constructed from the lattice-based cryptosystem of Regev~\cite{Regev05}. It is based on the learning with error problem, which can be reduced from worst-case (quantum) hardness of the (general) shortest vector problem. Hence, breaking Regev's cryptosystem implies an efficient algorithm for approximating the lattice problem, which is assumed to be hard even quantumly. Briefly, the cryptosystem uses dimension $k$ as security parameter and is parametrized by two integers $m$ and $p$, where $p$ is a prime, and a probability distribution on $\mathbb{Z}_p$. A regular public key for Regev's scheme is indistinguishable from a case where a public key is chosen independently from the secret key, and in this case, the ciphertext carries essentially no information about the message. Thus, the public key of a regular key pair can be used as the unconditional binding key $\pkB'$ in the commitment scheme for the ideal-world simulation. Then for the real protocol, an unconditionally hiding commitment key $\pkH'$ can simply be constructed by uniformly choosing numbers in $\mathbb{Z}_p^k \times \mathbb{Z}_p$. Both public keys will be of size $O(mk \log p)$, and the encryption process involves only modular additions, which makes its use simple and efficient.

The idea is now the following. We add (at least) $k$ executions of our $\CFProtocol$ as a first step to the construction of~\cite{DFLSS09} to generate a uniformly random sequence $coin_1 \ldots coin_k$. These $k$ random bits produce a $\pkH'$ as sampled by $\mathcal{G}_\mathtt{H}$, except with negligible probability. Hence, in the real-world, Bob can use $coin_1 \ldots coin_k = \pkH'$ as key for committing to all his basis choices and measurement outcomes. Since an ideal-world adversary $\dhB$ is free to choose any key, it can generate $(\pkB', \sk')$, i.e.\ a regular public key together with a secret key according to Regev's cryptosystem. For the security proof, write $\pkB' = coin_1 \ldots coin_k$. In the simulation, $\dhB$ first invokes $\ddhB$ for each $coin_i$ to simulate one coin-flip execution with $coin_i$ as result. As before, whenever $\ddhB$ asks $\cF$ to output a bit, it instead receives this $coin_i$. Then $\dhB$ has the possibility to decrypt dishonest $\dB$'s commitments during simulation, which binds $\dB$ unconditionally to his committed measurement bases and outcomes. Finally, as we proved in the analysis of the $\CFProtocol$ that $\pkH'$ is a uniformly random string, Regev's proof of semantic security shows that $\pkH' \approxq \pkB'$, and (quantum-) computational security of the real protocols in~\cite{DFLSS09} follows.

%%% EFFICIENT SIMULATION %%%

\section{On Efficient Simulation in the CRS Model}
\label{sec:CFP-crs}

For our $\CFProtocol$ in the plain model, we cannot claim universal composability. As already mentioned, in case of unconditional security against dishonest $\dA$ according to Definition~\ref{def:unboundedAlice}, we do not require the simulator to be efficient. In order to achieve efficient simulation, $\dhA$ must be able to extract the choice bit efficiently out of $\dA$'s commitment, such that $\dA$'s input is defined after this step. The standard approach to do this is to give the simulator some trapdoor information related to the common reference string, that $\dA$ does not have in real life. Therefore, we extend the commitment scheme to build in such a trapdoor and ensure efficient extraction. To further guarantee UC-security, we circumvent the necessity of rewinding $\dB$ by extending the construction also with respect to equivocability.

We will adapt an approach to our set-up, which is based on the idea of UC-commitments \cite{CF01} and already discussed in the full version of \cite{DFLSS09}. We require a $\Sigma$-protocol for a (quantumly) hard relation $R = \{(x,w)\}$, i.e.\ an honest verifier perfect zero-knowledge interactive proof of knowledge, where the prover shows that he knows a witness $w$ such that the problem instance $x$ is in the language $L$ ($(x,w) \in R$). Conversations are of form $(a_{\Sigma}, c_{\Sigma}, z_{\Sigma})$, where the prover sends $a_{\Sigma}$, the verifier challenges him with bit $c_{\Sigma}$, and the prover replies with $z_{\Sigma}$. For practical candidates of $R$, see e.g.~\cite{DFS04}. Instead of the simple commitment scheme, we use the keyed dual-mode commitment scheme described in Section~\ref{sec:DualProtocols} but now based on a multi-bit version of Regev's scheme \cite{PVW08}. Still we construct it such that depending of the key $\pkH$ or $\pkB$, the scheme provides both flavors of security and it holds that $\pkH \approxq \pkB$.

In real life, the CRS consists of commitment key $\pkB$ and an instance $x'$ for which it holds that $\nexists \ w'$ such that $(x',w') \in R$, where we assume that $x \approxq x'$. To commit to bit $a$, $\A$ runs the honest verifier simulator to get a conversation $(a_{\Sigma}, a, z_{\Sigma})$. She then sends $a_{\Sigma}$ and two commitments $c_0, c_1$ to $\B$, where $c_a = com_{\pkB}(z_{\Sigma},r)$ and $c_{1-a} = com_{\pkB}(0^{z'},r')$ with randomness $r,r'$ and $z' = |z|$. Then, $a, z_{\Sigma}, r$ is send to open the relevant one of $c_0$ or $c_1$, and $\B$ checks that $(a_{\Sigma}, a, z_{\Sigma})$ is an accepting conversation. Assuming that the $\Sigma$-protocol is honest verifier zero-knowledge and $\pkB$ leads to unconditionally binding commitments, the new commitment construction is again unconditionally binding. 

During simulation, $\dhA$ chooses a $\pkB$ in the CRS such that it knows the matching decryption key $\sk$. Then, it can extract $\dA$'s choice bit $a$ by decrypting both $c_0$ and $c_1$ and checking which contains a valid $z_\Sigma$ such that $(a_{\Sigma}, a, z_{\Sigma})$ is accepting. Note that not both $c_0$ and $c_1$ can contain a valid reply, since otherwise, $\dA$ would know a $w'$ such that $(x',w') \in R$. In order to simulate in case of $\dB$, $\ddhB$ chooses the CRS as $\pkH$ and $x$ (where $x$ is such that there exists a $w$ with $(x,w) \in R$). Hence, the commitment is unconditionally hiding. Furthermore, it can be equivocated, since $\exists \ w$ with $(x,w) \in R$ and therefore, $c_0, c_1$ can both be computed with valid replies, i.e.\ $c_0 = com_{\pkH}(z_{0 \Sigma}, r)$ and $c_1 = com_{\pkH}(z_{1 \Sigma}, r')$. Quantum-computational security against $\dB$ follows from the indistinguishability of the keys $\pkB$ and $\pkH$ and the indistinguishablity of the instances $x$ and $x'$, and efficiency of both simulations is ensured due to extraction and equivocability.

\section*{Acknowledgments}
We thank Christian Schaffner and Serge Fehr for useful comments on an earlier version of the paper and the former also for discussing the issue of efficient simulation in earlier work. CL acknowledges financial support by the MOBISEQ research project funded by NABIIT, Denmark.

\bibliographystyle{alpha}
\bibliography{crypto,qip,procs}

\end{document}